\newtheorem{theorem}{Theorem}
\newtheorem{definition}{Definition}
\newcommand*{\affaddr}[1]{#1} 
\newcommand*{\Mark}[1]{\textsuperscript{#1}}
\newcommand{\forceindent}{\leavevmode{\parindent=1em\indent}}
\title{Recovering Metabolic Networks using A Novel Hyperlink Prediction Method\\}
\author{Muhan Zhang\Mark{1}, Zhicheng Cui\Mark{1}, Tolutola Oyetunde\Mark{2}, Yinjie Tang\Mark{2}, \and Yixin Chen\Mark{1}\\
\affaddr{\Mark{1}Department of Computer Science and Engineering, Washington University in St. Louis}\\
\affaddr{\Mark{2}Department of Energy, Environmental \& Chemical Engineering, Washington University in St. Louis}\\
\{muhan, z.cui, toyetunde, yinjie.tang\}@wustl.edu, chen@cse.wustl.edu
}
\begin{document}
\maketitle

\begin{abstract}
Studying metabolic networks is vital for many areas such as novel drugs and bio-fuels. For biologists, a key challenge is that many reactions are impractical or expensive to be found through experiments. Our task is to recover the missing reactions. By exploiting the problem structure, we model reaction recovery as a hyperlink prediction problem, where each reaction is regarded as a hyperlink connecting its participating vertices (metabolites). Different from the traditional link prediction problem where two nodes form a link, a hyperlink can involve an arbitrary number of nodes. Since the cardinality of a hyperlink is variable, existing classifiers based on a fixed number of input features become infeasible. Traditional methods, such as common neighbors and Katz index, are not applicable either, since they are restricted to pairwise similarities. In this paper, we propose a novel hyperlink prediction algorithm, called Matrix Boosting (MATBoost). MATBoost conducts inference jointly in the incidence space and adjacency space by performing an iterative completion-matching optimization. We carry out extensive experiments to show that MATBoost achieves state-of-the-art performance. For a metabolic network with 1805 metabolites and 2583 reactions, our algorithm can successfully recover nearly 200 reactions out of 400 missing reactions.
\end{abstract}

\section{Introduction}
Reconstructed metabolic networks are important tools for understanding the metabolic basis of human diseases, increasing the yield of biologically engineered systems, and discovering novel drug targets \cite{bordbar2014constraint}. Semi-automated procedures have been recently developed to reconstruct metabolic networks from annotated genome sequences \cite{thiele2010protocol}. However, these networks are often incomplete -- some reactions can be missing from them, which can severely impair their utility \cite{kumar2007optimization}. Thus, it is critical to develop accurate computational methods for completing metabolic networks. A method that can automatically recover missing reactions in metabolic networks is much desired.

Link prediction \cite{liben2007link,lu2011link} has been studied broadly in recent years \cite{singh2008relational,miller2009nonparametric,rendle2009bpr,menon2011link,cao2014collective,chen2015marginalized,ermics2015link,song2015top,wu2016modeling}. Existing methods can be grouped into three types: topological feature-based approaches, latent feature-based approaches, and supervised learning approaches. Popular criterions and methods include common neighbors, Jaccard coefficients \cite{liben2007link}, Katz index \cite{katz1953new}, and latent feature models \cite{miller2009nonparametric,rendle2009bpr,menon2011link} et al. However, these approaches are restricted to predicting pairwise relations. None of them are directly applicable to predicting hyperlinks. A hyperlink relaxes the assumption that only two nodes can participate in a link. Instead, an arbitrary number of nodes are allowed to jointly form a hyperlink. A network made up of hyperlinks is called a hypernetwork or hypergraph. 

Many complex networks are actually hypernetworks. Examples include biological networks, social networks with group relations, and citation networks. In metabolic networks, every reaction can be regarded as a hyperlink among its component metabolites. A hypernetwork can be naturally represented as an \textit{incidence matrix} $S$, where each column of $S$ is a hyperlink and each row represents a vertex. If vertex $i$ participates in hyperlink $j$, then $S_{ij}=1$, otherwise $S_{ij}=0$. In this paper, we model metabolic reaction recovery as a hyperlink prediction problem in hypernetworks. For metabolic networks, the $S$ matrix is also known as the \textit{Binary Stoichiometric Matrix}. Figure \ref{missingEx} shows an example.

Although hypernetworks are common in the real world, there are still limited research on hyperlink prediction. One great challenge lies in the variable cardinality of hyperlinks. Existing classifiers for link prediction are based on a fixed number of input features (e.g., concatenating features of vertex $a$ and features of vertex $b$). However, in hyperlink prediction, the number of vertices in a hyperlink is variable, which makes existing methods inapplicable. On the other hand, link prediction methods based on topological features, such as common neighbors, cannot be applied to hyperlink prediction either. The reason is that these measures are defined for pairs of nodes (links) instead of hyperlinks. A few basic methods for hyperlink prediction have poor performance on recovering metabolic networks. In summary, hyperlink prediction is an \textit{important, interesting, yet rarely touched challenging problem}.

In this paper, we introduce a novel hyperlink prediction algorithm for recovering metabolic networks. Our key observation is that a hyperlink $\mathbf{s}$ (a column vector in an incidence matrix) can be transformed into its equivalent matrix expression in the vertex adjacency space by $\mathbf{s}\mathbf{s}^{\rm{T}}$.
In this way, we are able to first infer the pairwise relationships between vertices leveraging existing link prediction methods, and then recover the missing hyperlinks through constrained optimization. Based upon this, we propose a two-step EM-like optimization method, which alternately performs a C step (completion) and an M step (matching) to find the best hyperlinks. We call the resulting algorithm Matrix Boosting (MATBoost). We compare our algorithm with six baseline methods on eight metabolic networks from five species. Experimental results demonstrate that our algorithm outperforms the state-of-the-art methods by a large margin on recovering missing metabolic reactions.

\section{Reaction Recovery as Hyperlink Prediction}
One significant difference between link prediction and hyperlink prediction is the total number of possible links / hyperlinks. For a network with $M$ vertices, the total number of links is $O(M^2)$, while the total number of hyperlinks will be $O(2^M)$. This prevents us from giving a score to every possible hyperlink like what we do in link prediction. Therefore, we consider a \textit{transductive inference} setting, where a set of test hyperlinks are given as candidates, and we only classify the given test examples instead of classifying all. This setting is natural in practical applications. For instance, in the reaction prediction task, we do not need to consider all $2^M$ possible reactions since most of them do not contain biological meanings. Instead, we restrict the candidate hyperlinks to be within a universal reaction pool. This pool is the set of all known metabolic reactions. Consequently, our task is to classify which reactions in the pool are the missing reactions for a specific metabolic network.
\subsection{Problem definition and notations}
We formally define the missing reaction recovery problem as follows: Given an $M\times N^S$ binary stoichiometric matrix $S$ ($M$ metabolites and $N^S$ reactions) of a metabolic network, we assume that several reactions are missing from it. We denote these missing columns by $\Delta S$. We also have a universal reaction pool, whose binary stoichiometric matrix is denoted by $U$. $U$ contains all known metabolic reactions, including both positive reactions ($\Delta S$) as well as negative reactions which do not belong to this metabolic network. Our task is to recover/predict as many missing reactions from $U$ as possible. The corresponding hyperlink prediction problem is parallel. Given a training incidence matrix $S$ (where some column hyperlinks $\Delta S$ are missing) and a testing incidence matrix $U$ (the candidate hyperlinks to select from), we aim to predict which columns in $U$ belong to $\Delta S$.

For an incidence matrix $S$, we can project it into the vertex adjacency space by $A = SS^{\rm T}$. $A$ is the weighted adjacency matrix where $A_{ij}$ is the number of hyperlinks that vertices $i$ and $j$ both participate in. In systems biology, $A$ is called the \textit{Compound Adjacency Matrix}. 

\begin{figure*}[tbp]
\centering
\includegraphics[width=0.85\textwidth]{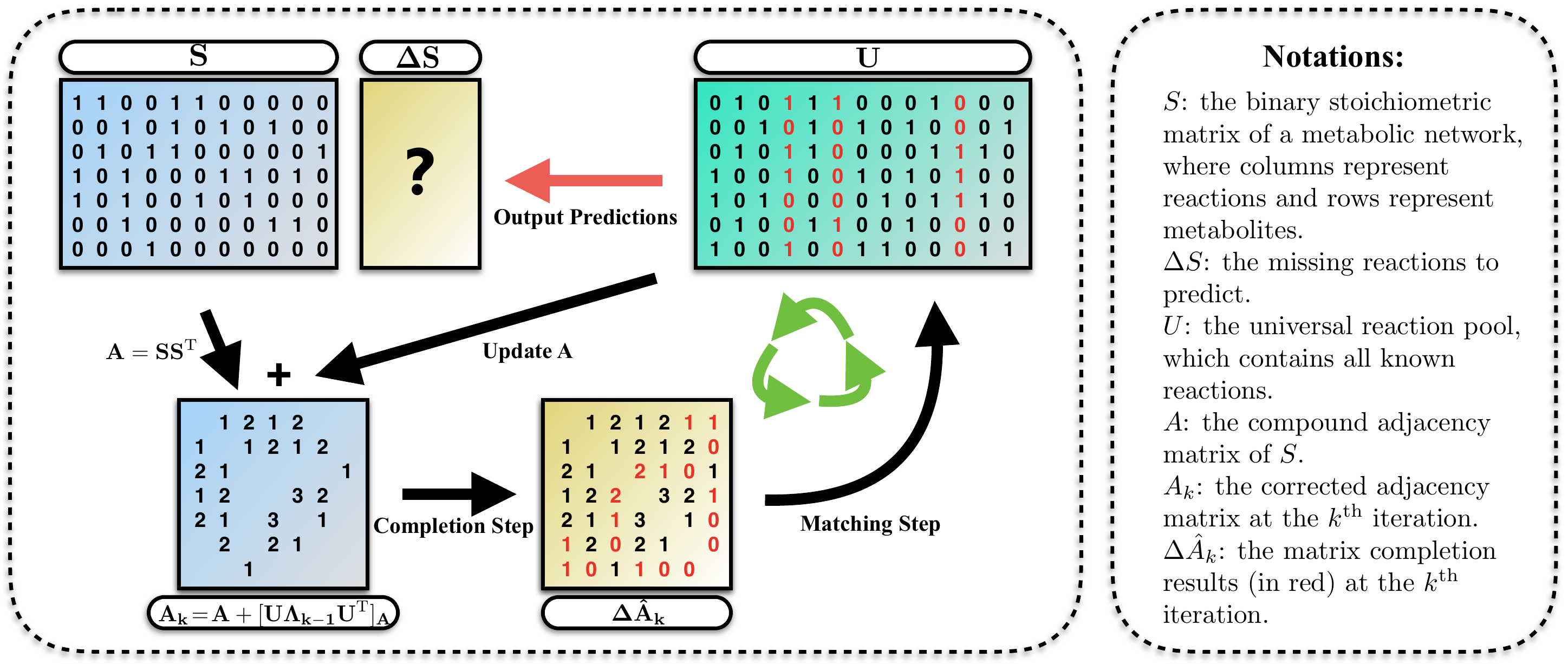}
\caption{\small An illustration of MATBoost. The training incidence matrix $S$ is first transformed to its adjacency matrix $A$. $A$ has many missing entries some of which are from $\Delta S$. The C (completion) step uses matrix factorization to complete the missing entries. The M (matching) step will search for the best matches from the candidate matrix $U$ by solving a constrained optimization. The matched hyperlinks will then be added to $A$ in order to correct $A$'s bias. This procedure is iterated several times to boost the performance.}
\label{missingEx}
\end{figure*}

\section{Matrix Boosting Algorithm}
We illustrate the proposed Matrix Boosting (MATBoost) algorithm in Figure \ref{missingEx}. The overall framework is as follows. 
 \begin{quote}
\textbf{Input}: Initial training adjacency matrix $A = SS^{\rm T}$, candidate hyperlinks $U$, initial zero prediction scores $\Lambda_0$, maximum iteration number $I_\text{max}$.\\
\textbf{Repeat from $k=1$}:
\begin{quote}
a) Update A, set: 
\end{quote}
\begin{equation}\label{bias2}
A_{k} = A + [U\Lambda_{k-1} U^{\rm T}]_{A}.
\end{equation}
\begin{quote}
b) The C (completion) step: 
Input $A_k$, output the matrix completion results $\Delta \hat{A}_k$.\\
c) The M (matching) step: 
Input $\Delta \hat{A}_k$, output the prediction scores $\Lambda_{k}$.\\
d) Stopping criterion:
If $||\Lambda_{k} - \Lambda_{k-1}||_F \geq ||\Lambda_{k-1} - \Lambda_{k-2} ||_F$ or  $k = I_\text{max}$, stop the algorithm; otherwise $k \leftarrow k+1$ and repeat from step a).
\end{quote}
\textbf{Output}: The final prediction scores: 
\begin{equation}
\bar{\Lambda} = \frac{1}{k-2} \sum_{i=1}^{k-2} \Lambda_i.
\end{equation}
\end{quote}
\vspace*{-0.3\baselineskip}
We now explain each step in the following subsections.


\subsection{Building block 1: decomposition of the adjacency matrix} 
Since direct inference in the incidence space is difficult (due to the variable cardinality of hyperlinks), we first project hyperlinks into their vertex adjacency space and infer their vertex interactions. Given the training hyperlinks $S$, we can calculate the training adjacency matrix by $A = SS^{\rm T}$. However, since $S$ is not complete (some columns $\Delta S$ are missing), the resulting $A$ is also incomplete.

Because $\Delta S$ is missing, the complete incidence matrix should be $\left[\!\! \begin{array}{cc}   S\!\! & \!\!\Delta S  \end{array}\!\!\right]$. We can calculate its adjacency matrix as follows:
\vspace*{-0.3\baselineskip}
\begin{align}\label{sst}
\left[\!\! \begin{array}{cc}   S \!\!& \!\!\Delta S  \end{array}\!\!\right]\left[\!\! \begin{array}{cc}   S \!\!& \!\!\Delta S  \end{array}\!\!\right]^{\rm T} &= SS^{\rm T} + \Delta S \Delta S^{\rm T}\nonumber\\ &= A + \Delta A,
\end{align}
where we define $\Delta A = \Delta S \Delta S^{\rm T}$. We notice that since hyperlinks $\Delta S$ are missing, the adjacency matrix $A$ also incurs a loss $\Delta A$. If we can accurately recover $\Delta A$, we will be able to infer $\Delta S$ consequently. 

A straightforward idea is to perform link prediction on $\Delta A$ using $A$. However, this will cause problems because $\Delta A$ may overlap with $A$ at some entries, while existing link prediction methods require that training links $A$ and testing links $\Delta A$ are fully separated.

To fix this issue, we first decompose $A+\Delta A$ as follows:
\begin{equation}\label{dec}
\begin{split}
A + \Delta A  =  \underbrace{A + [\Delta A]_{A}}_{A^+} + \underbrace{[\Delta A]_{\bar{A}}}_{\Delta A^-},
\end{split}
\end{equation}
where we have used $[X]_{A}$ to denote the operation that only keeps the entries of $X$ at $A$'s nonempty entries and mask all else, which is defined as:
\begin{equation}
\begin{split}
[X_{ij}]_{A} = \left\{ \begin{aligned} X_{ij},~~\text{if}~A_{ij} \neq 0\\
0,~~\text{if $A_{ij}= 0$}. \end{aligned} \right.
\end{split}
\end{equation}
$[X]_{\bar{A}}$ is conversely defined as keeping $X$ only at $A$'s empty entries.

Equation (\ref{dec}) decomposes the complete adjacency matrix into $A^+$ and $\Delta A^-$. $A^+$ is the training adjacency matrix $A$ plus the entries of $\Delta A$ that overlap with $A$. $\Delta A^-$ is $\Delta A$ discarding its entries that overlap with $A$. Such a decomposition ensures that $A^+$ and $\Delta A^-$ are isolated from each other. Our goal is thus to leverage $A^+$ to predict $\Delta A^-$, and subsequently, to recover $\Delta S$.

\vspace*{-0.7\baselineskip}
\subsection{Building block 2: the completion step}
We first assume that $A^+$ is known (which is not true because $\Delta A$ is unknown beforehand, we will show how to iteratively approximate it later). Then predicting $\Delta A^-$ can be regarded as a matrix completion \cite{candes2009exact} or weighted link prediction problem \cite{lu2010link}. Concretely, we aim to complete the empty entries of $A^+$, which we denote by $\Delta \hat{A}$. Ideally, the predicted $\Delta \hat{A}$ will resemble $\Delta A^-$ by having large values at $\Delta A^-$'s nonempty entries and near-zero values at other places. 

In this paper, we adopt matrix factorization \cite{koren2009matrix} to infer $\Delta A^-$ because it is fast and accurate.
The optimization problem is as follows:
\begin{equation}\label{mftrain}
\begin{split}
&\mathop{\text{minimize}}_\Theta ~~ \sum \limits_{i<j, A^+_{ij}>0} ||A^+_{ij} - y_{ij}||_2^2 + \gamma\mathcal{R}(\Theta),\\
&~~\text{where}~~ y_{ij} = w_0 + w_i + w_j + \sum_{f=1}^k v_{if} v_{jf},\\
&~~~~ \Theta ~\text{is the set of all parameters $w_0, w_i, w_j, v_{if}, v_{jf}$}.
\end{split}
\end{equation}
The optimization only uses half of the training entries since the adjacency matrix $A^+$ is symmetric. After training, we can complete the empty entries by:
\begin{align}\label{mfpredict}
\Delta\hat{A}_{ij} = 
\left\{ \begin{aligned}& w_0 + w_i + w_j + \mathbf{v}_i^{\rm T} \mathbf{v}_j,~~\text{if}~A^+_{ij}=0,\\
&0, ~~\text{if}~A^+_{ij}\neq0.
\end{aligned} \right.
\end{align}

\subsection{Building block 3: the matching step}
In the matching step, we aim to select a a batch of hyperlinks from $U$ that best match the predicted $\Delta \hat{A}$. Firstly, we define a diagonal matrix $\Lambda$ containing the selection indicators $\lambda_i$ for the $i^\text{th}$ hyperlink in $U$. $\lambda_i$ is 1 if hyperlink $i$ is selected and 0 otherwise. 

Now consider the optimization problem as follows:
\begin{equation}\label{opt1}
\begin{split}
&\mathop{\text{minimize}}_\Lambda ~~ ||[U\Lambda U^{\rm T}]_{\bar{A}} - \Delta A^-||_{F}^2, \\
&\text{subject to} ~~\lambda_i \in \{0,1\}, ~ i=1~...~N^U, \\
&\text{where} ~~ \Lambda = \left[ \begin{array}{cccc}   \lambda_1 &  &  &\\  & \lambda_2 & & \\  & &... & \\  & & & \lambda_{N^U} \end{array}\right],\\
& ~~~~~~~~\text{$N^U$ is the number of columns in $U$,}\\
&~~~~~~~~\text{$||\cdot||_F$ is the Frobenius norm.}
\end{split}
\end{equation}

\begin{theorem}
The optimization problem (\ref{opt1}) is an integer least square problem with the ground-truth indicators $\Lambda^*$ being a global minimum of it. (proof in the supplemental material)
\end{theorem}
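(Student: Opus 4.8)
The plan is to verify the two assertions in the statement separately: that problem~(\ref{opt1}) has the form of an integer least squares problem, and that the ground-truth indicator matrix $\Lambda^*$ is a global minimizer of it, indeed one that attains objective value $0$.

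For the first assertion I would argue by linearity. Write $U = [\mathbf{u}_1, \dots, \mathbf{u}_{N^U}]$ and collect the selection indicators into the binary vector $\boldsymbol{\lambda} = (\lambda_1,\dots,\lambda_{N^U})^{\rm T}$. Since $\Lambda$ is diagonal, $U\Lambda U^{\rm T} = \sum_{i=1}^{N^U}\lambda_i\,\mathbf{u}_i\mathbf{u}_i^{\rm T}$, and because the masking $[\,\cdot\,]_{\bar A}$ is just the Hadamard product with the fixed $0$--$1$ matrix indicating $A$'s empty entries, it is linear, so $[U\Lambda U^{\rm T}]_{\bar A} = \sum_i \lambda_i\,[\mathbf{u}_i\mathbf{u}_i^{\rm T}]_{\bar A}$. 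Vectorizing, let $R$ be the matrix whose $i$-th column is $\mathrm{vec}\big([\mathbf{u}_i\mathbf{u}_i^{\rm T}]_{\bar A}\big)$ and set $\mathbf{b} = \mathrm{vec}(\Delta A^-)$; then the objective of~(\ref{opt1}) equals $\|R\boldsymbol{\lambda} - \mathbf{b}\|_2^2$, while the feasible set $\{0,1\}^{N^U}$ sits inside $\mathbb{Z}^{N^U}$. This is precisely the standard form of an integer least squares problem. (By symmetry of the adjacency matrices one may instead restrict the rows of $R$ and the entries of $\mathbf{b}$ to upper-triangular coordinates, which changes nothing essential.)

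For the second assertion I would invoke the modeling assumption that every missing reaction in $\Delta S$ occurs as a column of the pool $U$. Let $\Lambda^*$ be the diagonal indicator whose diagonal selects exactly those columns; then $U\Lambda^* U^{\rm T} = \Delta S\,\Delta S^{\rm T} = \Delta A$. Applying the mask and using the decomposition~(\ref{dec}), which \emph{defines} $\Delta A^- = [\Delta A]_{\bar A}$, gives $[U\Lambda^* U^{\rm T}]_{\bar A} = [\Delta A]_{\bar A} = \Delta A^-$, so the objective vanishes at $\Lambda^*$. Since the objective is a sum of squares it is nonnegative at every feasible point, hence the value $0$ attained at $\Lambda^*$ is a global minimum.

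Very little here is genuinely delicate; the one spot deserving care is the bookkeeping in the vectorization step --- checking that $[\,\cdot\,]_{\bar A}$ is a fixed coordinate projection and therefore commutes with the linear combination over $i$ --- together with stating unambiguously what ``ground truth'' means, namely the selection of the $\Delta S$ columns inside $U$. I would also note that if $U$ contains repeated or otherwise degenerate columns then $\Lambda^*$ need not be the \emph{unique} minimizer, which is exactly why the theorem only claims it is \emph{a} global minimum; no uniqueness argument is needed.
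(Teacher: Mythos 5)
Your proof is correct and follows the only natural route: linearity of $\Lambda \mapsto [U\Lambda U^{\rm T}]_{\bar A}$ puts (\ref{opt1}) in the standard vectorized form $\|R\boldsymbol{\lambda}-\mathbf{b}\|_2^2$ over $\{0,1\}^{N^U}$, and since $\Delta S$'s columns lie in $U$ by assumption, the ground-truth $\Lambda^*$ yields $[U\Lambda^* U^{\rm T}]_{\bar A}=[\Delta A]_{\bar A}=\Delta A^-$ and hence objective value $0$, which nonnegativity makes a global minimum. This matches the paper's argument in substance, and your remark on non-uniqueness under degenerate columns of $U$ is a sensible clarification rather than a deviation.
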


In practice, we will substitute $\Delta \hat{A}$ for $\Delta A^-$ in (\ref{opt1}) to solve $\Lambda$. However, integer least square (ILSQ) optimization is NP hard. When $N^U$ is large (usually thousands in metabolic networks), it is generally intractable. Therefore, we propose a constrained lasso regression instead. The optimization problem is as follows:
\vspace*{-0.5\baselineskip}
\begin{equation}\label{lasso}
\begin{split}
&\mathop{\text{minimize}}_\Lambda ~~ ||[U\Lambda U^{\rm T}]_{\bar{A}} - \Delta \hat{A}||_{F}^2 + \alpha \sum_{i=1}^{N^U} |\lambda_i|, \\
&\text{subject to} ~~0\leq \lambda_i \leq1, ~ i=1~...~N^U, \\
&\text{where} ~~ \Lambda = \left[ \begin{array}{cccc}   \lambda_1 &  &  &\\  & \lambda_2 & & \\  & &... & \\  & & & \lambda_{N^U} \end{array}\right],\\
\end{split}
\end{equation}

The above optimization relaxes the integer $\lambda_i$ to be continuous within $[0,1]$. Because of the limited number of hyperlinks in U that contribute to $\Delta A^-$, we add an lasso ($L_1$) penalty to preserve the sparsity of solutions. We solve the above optimization using subgradient methods. After getting the prediction scores $\lambda_i$, we can rank the hyperlinks in $U$ and select the top ones as our predictions of the missing hyperlinks. 

\subsection{Combine all: boosting via iterative optimization}
In the completion step, we have assumed that $A^+ = A + [\Delta A]_A$ is known in order to separate $\Delta A^-$ from $A^+$ and infer $\Delta A^-$. Notice that $A^+$ is essentially the training adjacency matrix $A$ plus the overlapping entries of the missing hyperlinks $\Delta S\Delta S^{\rm{T}}$. Thus, to approximate the unknown $A^+$, we propose to iteratively add the predicted missing hyperlinks $U\Lambda U^{\rm T}$ into $A$ ($\Lambda$ is real matrix now) to remedy its bias from the real $A^+$.

Given the last iteration's prediction scores $\Lambda_{k-1}$, we first update $A_k$ by adding to $A$ the predicted hyperlinks $[U\Lambda_{k-1} U^{\rm T}]_{A}$ from the last iteration. After getting $A_k$, which is an approximation of $A^+$, the algorithm performs the $C$ step and $M$ step subsequently. MATBoost performs this procedure in an iterative fashion and outputs the average prediction scores in order to boost the performance. It is further equipped with an early stopping criterion which we will explain in the next section. 

\section{Algorithm Analysis}
We provide two different views of the MATBoost algorithm in this section. First we define some notations:
\begin{definition}
We define the C (completion) step (\ref{mftrain}) and (\ref{mfpredict}) as a map $T_c:~\mathcal{A} \rightarrow \mathcal{A}$, where $\mathcal{A}$ is the space of adjacency matrices $A$. $T_c$ takes an incomplete adjacency matrix $A_k$ as input and outputs the predictions $\Delta \hat{A}_k$.
\end{definition}

\begin{definition}
We define the M (matching) step (\ref{lasso}) as a map $T_m:~\mathcal{A} \rightarrow \mathcal{L}$~, where $\mathcal{L}$ is the space of diagonal prediction score matrices $\Lambda$. $T_m$ takes the predictions $\Delta \hat{A}_k$ as input and outputs the prediction scores $\Lambda_{k}$.


\end{definition}

\subsection{A coordinate descent view}
MATBoost may be viewed as an example of coordinate descent algorithms. In the C step, we fix $\Lambda$ and update $\Delta \hat{A}$. In the M step, we fix $\Delta \hat{A}$ and update $\Lambda$. Concretely, MATBoost can be viewed as an alteration between the two steps:
\begin{quote}
C (completion) step:\\
\forceindent Fix $\Lambda_{k-1}$, update $\Delta \hat{A}$ by
$\Delta \hat{A}_k \!\!=\!\! T_c(A+[U\Lambda_{k-1}U^{\rm{T}}]_A)$.\\
M (matching) step:\\
\forceindent Fix $\Delta \hat{A}_k$, update $\Lambda$ by $\Lambda_k \!=\! T_m(\Delta \hat{A}_k)$.\\
\end{quote}

\vspace*{-1.5\baselineskip}
\subsection{A fixed-point view}
Here we discuss the convergence properties and the stopping criterion of MATBoost under a fixed-point view.
\begin{definition}
We define one iteration of both C and M steps as a map $T_{cm}:~\mathcal{L} \rightarrow \mathcal{L}$~. $T_{cm}$ takes the $\Lambda_{k-1}$ from the last iteration as input and outputs the new prediction scores $\Lambda_{k}$. Equivalently, $T_{cm}(\Lambda_{k-1}) = T_m(T_c(A+[U\Lambda_{k-1}U^{\rm{T}}]_A))$.
\end{definition}

Having defined $T_{cm}$, we find that the MATBoost algorithm can be expressed as a fixed-point iteration algorithm \cite{khamsi2011introduction}, where the algorithm iteratively takes the intermediate result $\Lambda_{k} = T_{cm}(\Lambda_{k-1})$ as input to get $\Lambda_{k+1}$, $\Lambda_{k+2}$... until the stopping criterion is satisfied. 


Now we introduce the Banach fixed-point theorem.

\begin{theorem}\label{t1}
(Banach fixed-point theorem \cite{palais2007simple}) If $T_{cm}$ is a contraction mapping, then MATBoost will iteratively converge to a fixed-point $\Lambda^*$, which satisfies $\Lambda^* = T_{cm}(\Lambda^*)$.
\end{theorem}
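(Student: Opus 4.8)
The plan is to invoke the Banach fixed-point theorem in its classical form, so the work reduces to (i) exhibiting a complete metric space on which $T_{cm}$ acts and (ii) running the standard Picard-iteration argument. For the underlying space I would take $\mathcal{L}$, the set of diagonal matrices $\Lambda = \mathrm{diag}(\lambda_1,\dots,\lambda_{N^U})$ with each $\lambda_i \in [0,1]$, equipped with the metric $d(\Lambda,\Lambda') = \|\Lambda - \Lambda'\|_F$ — the very norm already used in MATBoost's stopping criterion. Since $\mathcal{L}$ is a closed, bounded subset of the finite-dimensional vector space of diagonal matrices, it is a complete metric space, and $T_{cm}:\mathcal{L}\to\mathcal{L}$ maps it into itself by construction (the matching step (\ref{lasso}) returns scores in $[0,1]$, and the $A$-update (\ref{bias2}) is folded into $T_{cm}$).

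Next I would unpack the contraction hypothesis: there is $q\in[0,1)$ with $\|T_{cm}(\Lambda) - T_{cm}(\Lambda')\|_F \le q\,\|\Lambda - \Lambda'\|_F$ for all $\Lambda,\Lambda'\in\mathcal{L}$. Applying this to consecutive iterates $\Lambda_k = T_{cm}(\Lambda_{k-1})$ gives $\|\Lambda_{k+1}-\Lambda_k\|_F \le q\,\|\Lambda_k-\Lambda_{k-1}\|_F$, hence by induction $\|\Lambda_{k+1}-\Lambda_k\|_F \le q^{k}\,\|\Lambda_1-\Lambda_0\|_F$. A telescoping triangle inequality then yields, for every $m\ge 1$,
\begin{equation*}
\|\Lambda_{k+m}-\Lambda_k\|_F \;\le\; \sum_{j=0}^{m-1} q^{k+j}\,\|\Lambda_1-\Lambda_0\|_F \;\le\; \frac{q^{k}}{1-q}\,\|\Lambda_1-\Lambda_0\|_F,
\end{equation*}
which tends to $0$ as $k\to\infty$. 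Thus $\{\Lambda_k\}$ is Cauchy, and by completeness of $\mathcal{L}$ it converges to some $\Lambda^*\in\mathcal{L}$; this is precisely the asserted convergence of MATBoost's iterates.

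Finally I would check that the limit is the claimed fixed point. A contraction is $q$-Lipschitz, hence continuous, so $T_{cm}(\Lambda^*) = T_{cm}(\lim_k \Lambda_{k-1}) = \lim_k T_{cm}(\Lambda_{k-1}) = \lim_k \Lambda_k = \Lambda^*$. Uniqueness is the same computation: if $\Lambda^{**}$ is another fixed point, then $\|\Lambda^*-\Lambda^{**}\|_F = \|T_{cm}(\Lambda^*)-T_{cm}(\Lambda^{**})\|_F \le q\,\|\Lambda^*-\Lambda^{**}\|_F$ with $q<1$, forcing $\Lambda^*=\Lambda^{**}$. Within this conditional statement the only points needing care are routine: that $\mathcal{L}$ is genuinely complete (immediate, being closed and finite-dimensional) and that the iteration really has the form $\Lambda_k=T_{cm}(\Lambda_{k-1})$. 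The substantive obstacle — which the theorem deliberately assumes away via its hypothesis — is showing that the composite map $T_m\circ T_c$ together with the adjacency update is actually a contraction; establishing that would require quantitative Lipschitz bounds on the matrix-factorization completion step and on the constrained-lasso matching step, and is the natural place where a fully rigorous argument would be hard to complete rather than argue heuristically.
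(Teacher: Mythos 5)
Your proof is correct: it is the standard Picard-iteration proof of the Banach contraction principle, correctly instantiated on the complete metric space $(\mathcal{L}, \|\cdot\|_F)$, which is exactly the classical argument the paper invokes by citation (it gives no proof of its own for this theorem). Your closing remark rightly identifies that the real difficulty — verifying that $T_{cm}$ is actually a contraction — is assumed away by the hypothesis, which is also how the paper treats it.
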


\begin{theorem}\label{t2}
If $T_{cm}$ is a contraction mapping defined on the metric space $(\mathcal{L},d)$ and $d(\Lambda_x,\Lambda_y) = ||\Lambda_x-\Lambda_y||_F$, then the intermediate results $\Lambda_1,...,\Lambda_{k-2}$ of MATBoost satisfy the following inequality for any $k>2$:
\begin{equation}\label{c1}
||\Lambda_{k-1}-\Lambda_{k-2}||_F < ||\Lambda_{l-1}-\Lambda_{l-2}||_F, ~~\forall~ 2\leq l<k.
\end{equation}
Besides, $\Lambda_{k-2}$ is the closest point to $\Lambda^*$ among $\Lambda_0, ..., \Lambda_{k-2}$. (proof in the supplemental material)
\end{theorem}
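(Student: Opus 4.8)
The plan is to derive both assertions directly from the defining inequality of a contraction, $d\bigl(T_{cm}(\Lambda_x),T_{cm}(\Lambda_y)\bigr)\le q\, d(\Lambda_x,\Lambda_y)$ for some fixed $q\in[0,1)$, together with the elementary fact that the MATBoost iterates obey $\Lambda_m = T_{cm}(\Lambda_{m-1})$ for every $m\ge 1$. The latter holds because step a) with $k=1$ gives $A_1 = A + [U\Lambda_0 U^{\rm T}]_A = A$ (as $\Lambda_0$ is the zero matrix), so $\Lambda_1 = T_m(T_c(A)) = T_{cm}(\Lambda_0)$, and for general $k$ the definition of $T_{cm}$ gives $\Lambda_k = T_m\bigl(T_c(A+[U\Lambda_{k-1}U^{\rm T}]_A)\bigr) = T_{cm}(\Lambda_{k-1})$.

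First I would establish (\ref{c1}). Writing $b_l := \|\Lambda_{l-1}-\Lambda_{l-2}\|_F$, for every $m\ge 2$ I substitute $\Lambda_m = T_{cm}(\Lambda_{m-1})$ and $\Lambda_{m-1}=T_{cm}(\Lambda_{m-2})$ into the contraction inequality to get $b_{m+1} = d\bigl(T_{cm}(\Lambda_{m-1}),T_{cm}(\Lambda_{m-2})\bigr)\le q\, d(\Lambda_{m-1},\Lambda_{m-2}) = q\, b_m$. As long as $b_m>0$ (the only nontrivial case, and precisely the regime in which the algorithm has not yet stopped), $q<1$ forces $b_{m+1}<b_m$. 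Chaining this for $m=2,\dots,k-1$ yields $b_k<b_{k-1}<\cdots<b_2$, i.e. $\|\Lambda_{k-1}-\Lambda_{k-2}\|_F<\|\Lambda_{l-1}-\Lambda_{l-2}\|_F$ for every $2\le l<k$, which is exactly (\ref{c1}). As a byproduct this shows that under a genuine contraction the stopping test $\|\Lambda_k-\Lambda_{k-1}\|_F\ge\|\Lambda_{k-1}-\Lambda_{k-2}\|_F$ is never met before $I_\text{max}$.

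For the second assertion I would invoke Theorem \ref{t1}, which supplies the fixed point $\Lambda^\ast=T_{cm}(\Lambda^\ast)$. Put $c_m := \|\Lambda_m-\Lambda^\ast\|_F$. For every $m\ge 1$, using $\Lambda_m=T_{cm}(\Lambda_{m-1})$ and $\Lambda^\ast=T_{cm}(\Lambda^\ast)$, the contraction inequality gives $c_m = d\bigl(T_{cm}(\Lambda_{m-1}),T_{cm}(\Lambda^\ast)\bigr)\le q\, c_{m-1}<c_{m-1}$ whenever $c_{m-1}>0$. Hence $c_0>c_1>\cdots>c_{k-2}$, so $\Lambda_{k-2}$ uniquely minimizes the distance to $\Lambda^\ast$ over $\{\Lambda_0,\dots,\Lambda_{k-2}\}$; that is, among the retained iterates $\Lambda_{k-2}$ is the one closest to $\Lambda^\ast$.

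The only delicate point — and the closest thing to an obstacle — is the degenerate case in which some $b_m$ (or some $c_{m-1}$) vanishes, i.e.\ an exact fixed point is attained at a finite step, where the strict inequalities degrade to equalities. I would dispose of this by observing that in that event the algorithm has already converged and its stopping criterion is satisfied, so such $k$ fall outside the hypothesis that $\Lambda_1,\dots,\Lambda_{k-2}$ are genuine intermediate results of the still-running algorithm; in the complementary non-degenerate case the telescoping argument above applies verbatim with strict inequalities throughout. Everything else is routine, requiring no computation beyond iterating the one-line contraction bound.
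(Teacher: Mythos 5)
Your proof is correct and follows what is essentially the only natural route for this statement: apply the contraction inequality $d(T_{cm}(\Lambda_x),T_{cm}(\Lambda_y))\le q\,d(\Lambda_x,\Lambda_y)$ with $q<1$ to consecutive iterates $\Lambda_m=T_{cm}(\Lambda_{m-1})$ to obtain the strictly decreasing successive differences, and to the pair $(\Lambda_{m-1},\Lambda^*)$ with the fixed point from Theorem \ref{t1} to show the distances to $\Lambda^*$ strictly decrease; this matches the paper's argument. Your handling of the degenerate case where an exact fixed point is reached early (so strict inequality would fail) is a reasonable extra precaution and does not change the substance.
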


From Theorem \ref{t2} we see that $||\Lambda_{k} - \Lambda_{k-1}||_F < ||\Lambda_{k-1} - \Lambda_{k-2} ||_F$ for all $k$ is a necessary condition for $T_{cm}$ being a contraction mapping on $(\mathcal{L},d)$. Recall that the stopping criterion in MATBoost is $||\Lambda_{k} - \Lambda_{k-1}||_F \geq ||\Lambda_{k-1} - \Lambda_{k-2} ||_F$. We can see that it is essentially tracking this necessary condition. Once it is broken, the fixed-point iteration will stop converging. And the algorithm will stop the iteration earlier by outputting the average scores before iteration $k-1$.

Since $||T_{cm}(\Lambda^*)-\Lambda^*||_F=0$, we may also regard $||\Lambda_{k} - \Lambda_{k-1}||_F=||T_{cm}(\Lambda_{k-1}) - \Lambda_{k-1}||_F$ as a measure of how close the iteration $k-1$ is to convergence.

In practice, we find that $||\Lambda_{k}-\Lambda_{k-1}||_F$ typically drops sharply in the first iteration, then keeps decreasing for a few rounds, and then starts to fluctuate around a small value. Instead of using the last $\Lambda_{k-2}$, we find that using the mean scores of $\Lambda_1$ to $\Lambda_{k-2}$ usually achieves higher performance. We have Theorem \ref{t4} to explain it.

\begin{theorem}\label{t4}
The MATBoost algorithm is an ensemble of $k-2$ weak learners of $T_{cm}$ with different input matrices $A + [U\Lambda_i U^{\rm T}]_{A}$, $i$ from $0$ to $k-3$.
\end{theorem}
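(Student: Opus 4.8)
The plan is to prove that the object actually returned by MATBoost, namely $\bar{\Lambda} = \frac{1}{k-2}\sum_{i=1}^{k-2}\Lambda_i$, is literally a uniform average of $k-2$ applications of one and the same base map $T_{cm}=T_m\circ T_c$, each fed a different augmented adjacency matrix; ``ensemble of weak learners'' is then a faithful re-description of this average. First I would unroll the fixed-point recursion. By definition $\Lambda_i = T_{cm}(\Lambda_{i-1})$ for every $i\geq 1$, and $T_{cm}(\Lambda_{i-1}) = T_m\bigl(T_c(A + [U\Lambda_{i-1}U^{\rm T}]_A)\bigr)$. Hence for each $i\in\{1,\dots,k-2\}$ the score matrix $\Lambda_i$ is exactly the output of the completion--matching predictor run on the input $A_i := A + [U\Lambda_{i-1}U^{\rm T}]_A$, and as $i$ ranges over $1,\dots,k-2$ the index $i-1$ ranges over $0,\dots,k-3$, matching the statement. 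Since $\Lambda_0=0$, the first of these inputs is simply $A_0 = A$.

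Second, I would identify each such run as a \emph{learner}: $T_c$ trains a matrix-factorization model (\ref{mftrain}) on the observed entries of $A_i$ and predicts its missing entries $\Delta\hat{A}_i$ via (\ref{mfpredict}), and $T_m$ then turns $\Delta\hat{A}_i$ into a hyperlink score vector through the constrained lasso (\ref{lasso}); the composition is a map from an adjacency matrix to prediction scores, i.e. a predictor of the missing hyperlinks. These $k-2$ predictors differ only in their inputs $A_0, A_1, \dots, A_{k-3}$, so they are genuinely distinct models, and $\bar{\Lambda}$ weights them equally --- exactly the aggregation rule of an unweighted ensemble (bagging-style averaging of base predictors).

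Third, to justify the qualifier \emph{weak}, I would note that no single application of $T_{cm}$ can be expected to be accurate in isolation: the completion step requires the true $A^+ = A + [\Delta A]_A$, whereas the $i$-th run only has the surrogate $A_i$, whose bias is governed by how close $U\Lambda_{i-1}U^{\rm T}$ is to $\Delta S\,\Delta S^{\rm T}$ on the support of $A$. Early iterates use a poor surrogate ($A_0 = A$ ignores the overlap entries entirely), so each base predictor is individually imperfect, hence ``weak''; but because their input biases differ their errors are partially decorrelated, and averaging them reduces variance, which is the standard reason an ensemble outperforms its members. This can be made quantitative through the usual bias--variance decomposition of the averaged estimator, or simply attributed to ensemble theory.

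The main obstacle is not the algebra --- the unrolling in the first step is immediate --- but that ``weak learner'' and ``ensemble'' are informal notions, so the theorem is in essence an \emph{interpretation} result. I would therefore present the precise provable core, the identity $\bar{\Lambda} = \frac{1}{k-2}\sum_{i=1}^{k-2} T_{cm}(\Lambda_{i-1})$ with $k-2$ distinct inputs $A + [U\Lambda_i U^{\rm T}]_A$ for $i=0,\dots,k-3$, as the content of the proof, and offer the diversity/variance-reduction discussion as the accompanying justification for the terminology rather than as a separate formal claim.
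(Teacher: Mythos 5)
Your proposal is correct and matches the paper's own proof, whose entire content is the re-indexing identity $\bar{\Lambda} = \frac{1}{k-2}\sum_{i=1}^{k-2}\Lambda_i = \frac{1}{k-2}\sum_{i=0}^{k-3}T_{cm}(\Lambda_i)$ that you derive in your first step. The additional discussion of why the $T_{cm}$ instances count as distinct ``weak'' learners goes beyond what the paper writes (which relegates that point to a remark), but it does not change the argument.
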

\begin{proof}
It is easy to prove by noticing that $\Lambda_1$ to $\Lambda_{k-2}$ are just the outputs of $T_{cm}(\Lambda_0)$ to $T_{cm}(\Lambda_{k-3})$. Thus the final prediction scores can be written as:
\begin{equation}
\bar{\Lambda} = \frac{1}{k-2} \sum_{i=1}^{k-2} \Lambda_i = \frac{1}{k-2} \sum_{i=0}^{k-3} T_{cm}(\Lambda_i).
\end{equation}
The theorem follows.
\end{proof}
\vspace*{-0.3\baselineskip}
\noindent \textbf{Remark}: We may regard $A + [U\Lambda_i U^{\rm T}]_{A}$ as noisy samplings of $A^+$. Therefore, MATBoost can benefit from ensemble learning which is expected to improve performance.

\section{Related Work}

Although hyperlinks are common in real world and can be used to model multi-way relationships, currently there are still limited research on hyperlink prediction. Xu et al. \cite{xu2013hyperlink} proposed a supervised HPLSF framework to predict hyperlinks in social networks. To deal with the variable number of features, HPLSF uses their entropy score as a fixed-length feature for training a classification model. To our best knowledge, this is the only algorithm that is specifically designed for hyperlink prediction in arbitrary-cardinality hypernetworks.


Nevertheless, learning with hypergraphs as a special data structure is popular in the machine learning community, e.g., semi-supervised learning with hypergraph regularization \cite{yu2012adaptive,tian2009hypergraph}, modeling label correlations via hypernetworks in multi-label learning \cite{sun2008hypergraph}, and modeling communities to improve recommender systems \cite{bu2010music}. Zhou et al. \cite{zhou2006learning} studied spectral clustering in hypergraphs. They generalized the normalized cut \cite{shi2000normalized} algorithm to hypergraph clustering and proposed a hypergraph Laplacian. They also adapted their previous semi-supervised learning algorithm using graph regularization \cite{zhou2005learning} to hypergraph vertex classification. These research mainly aim to improve the learning performance on nodes by leveraging their hyperlink relations. However, none of them focuses on predicting the hyperlink relations. 

\begin{figure*}[htp]
\centering
\subfloat[iJO1366 dataset.]{\includegraphics[width=0.24\textwidth]{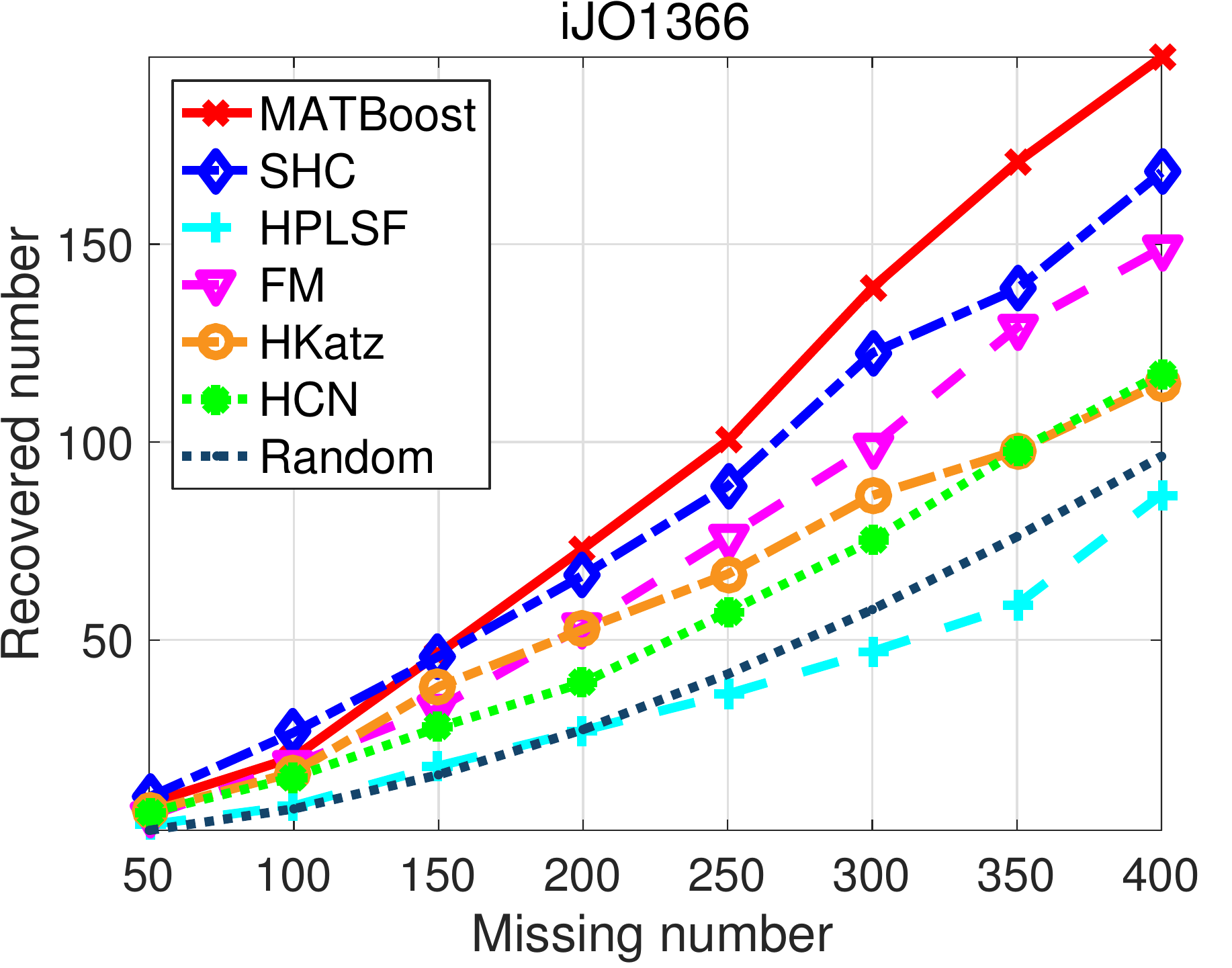}}
\subfloat[iAF1260b dataset.]{\includegraphics[width=0.24\textwidth]{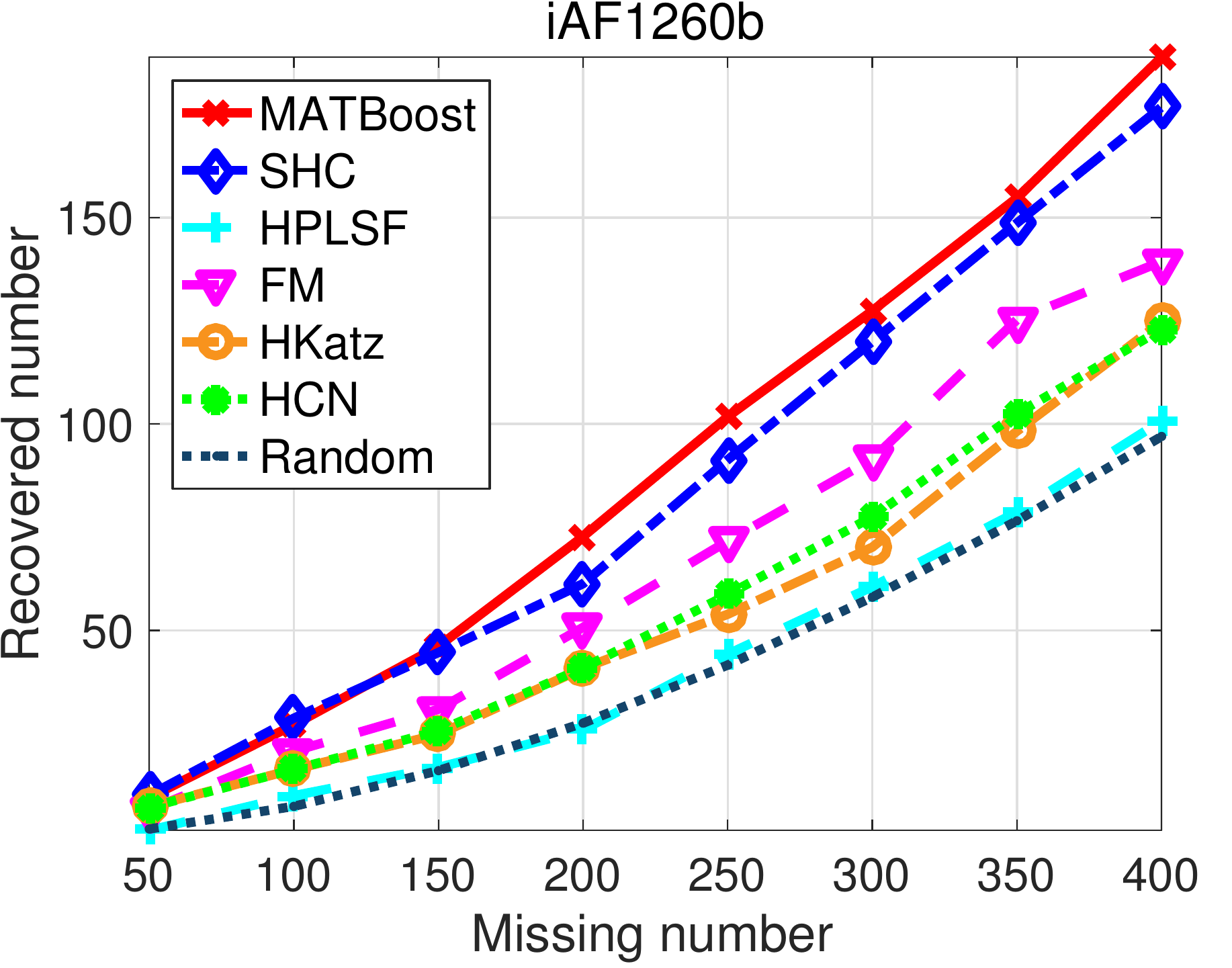}}
\subfloat[e\_coli\_core dataset.]{\includegraphics[width=0.24\textwidth]{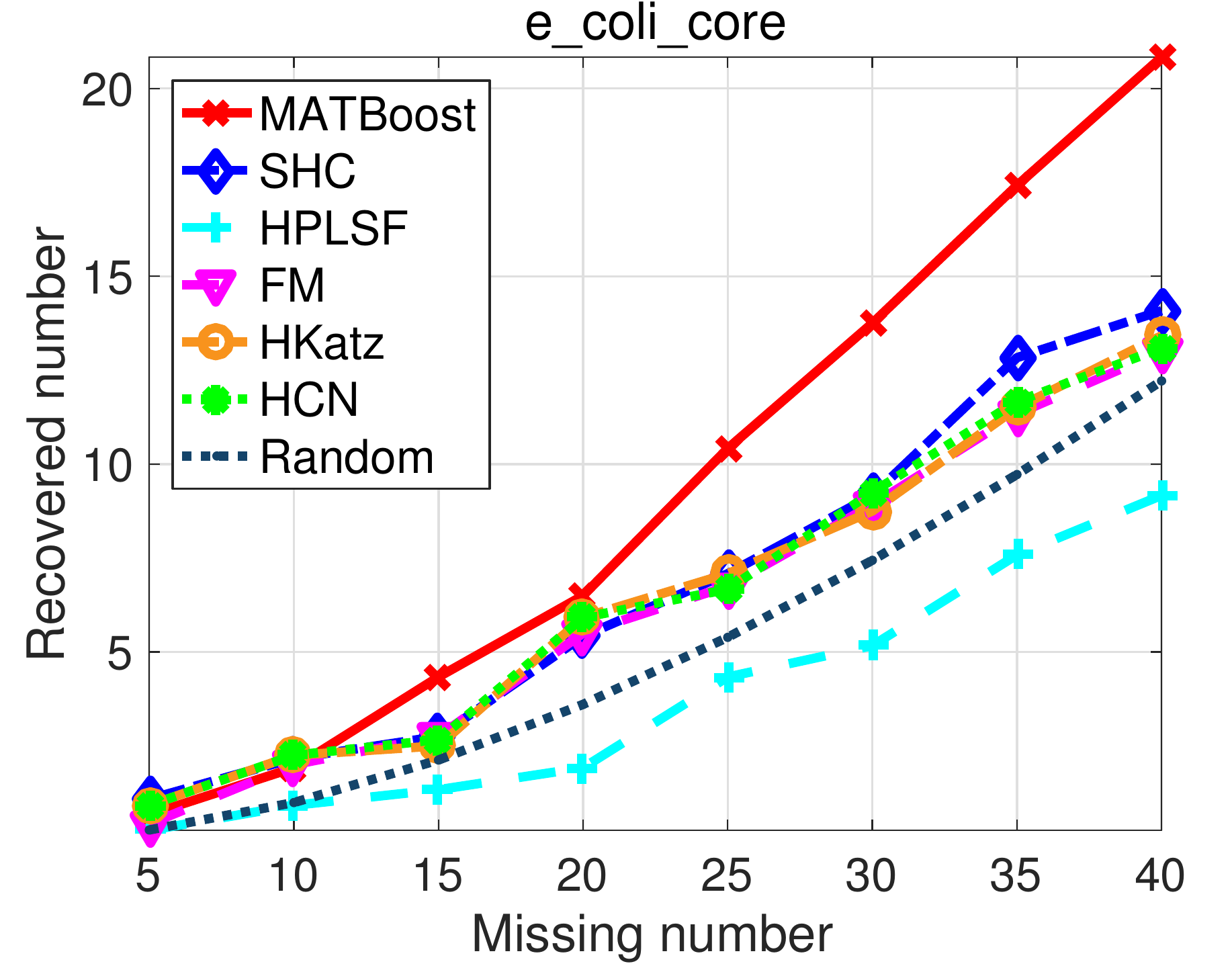}}
\subfloat[RECON1 dataset.]{\includegraphics[width=0.24\textwidth]{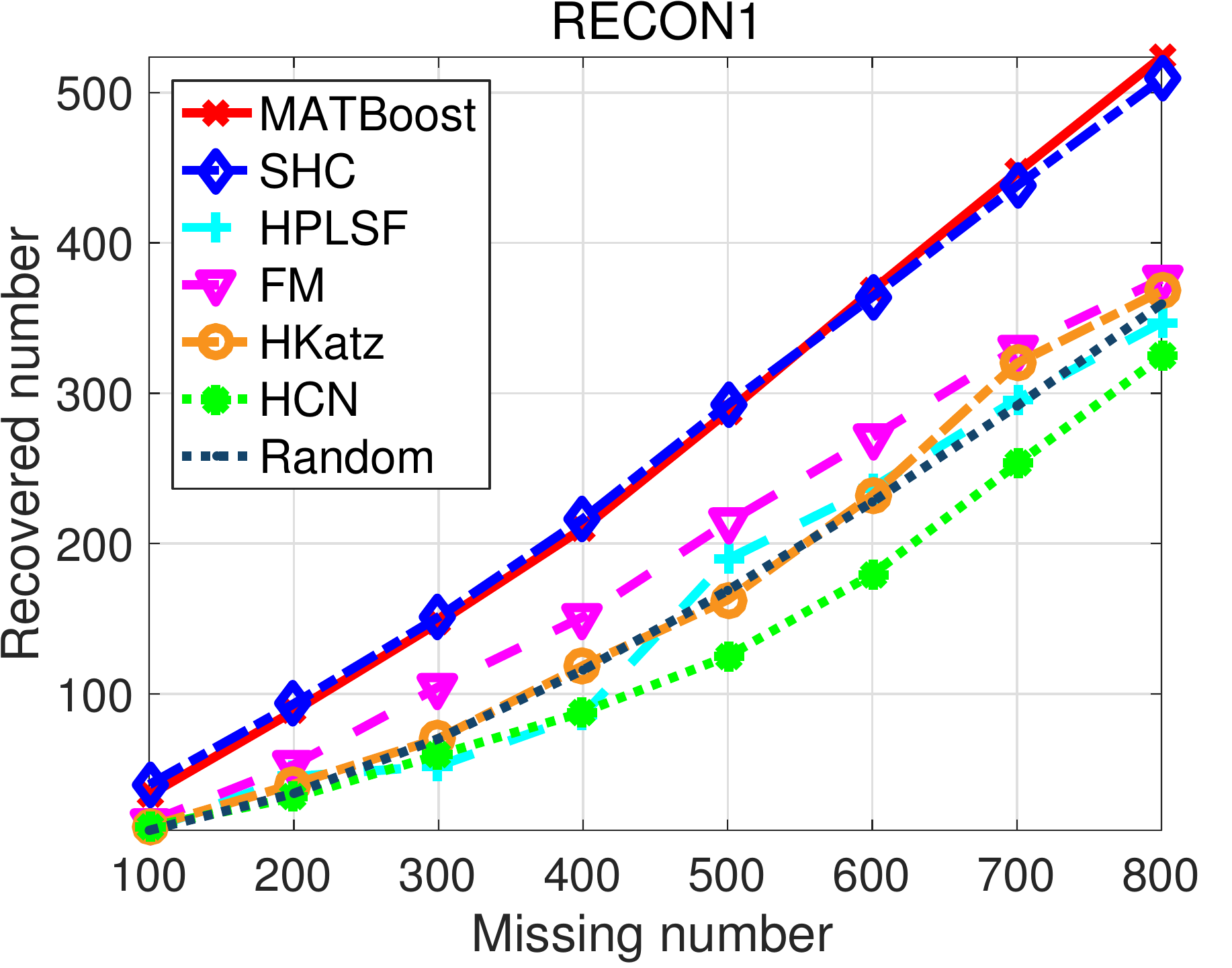}}

\centering
\subfloat[iAT\_PLT\_636 dataset.]{\includegraphics[width=0.24\textwidth]{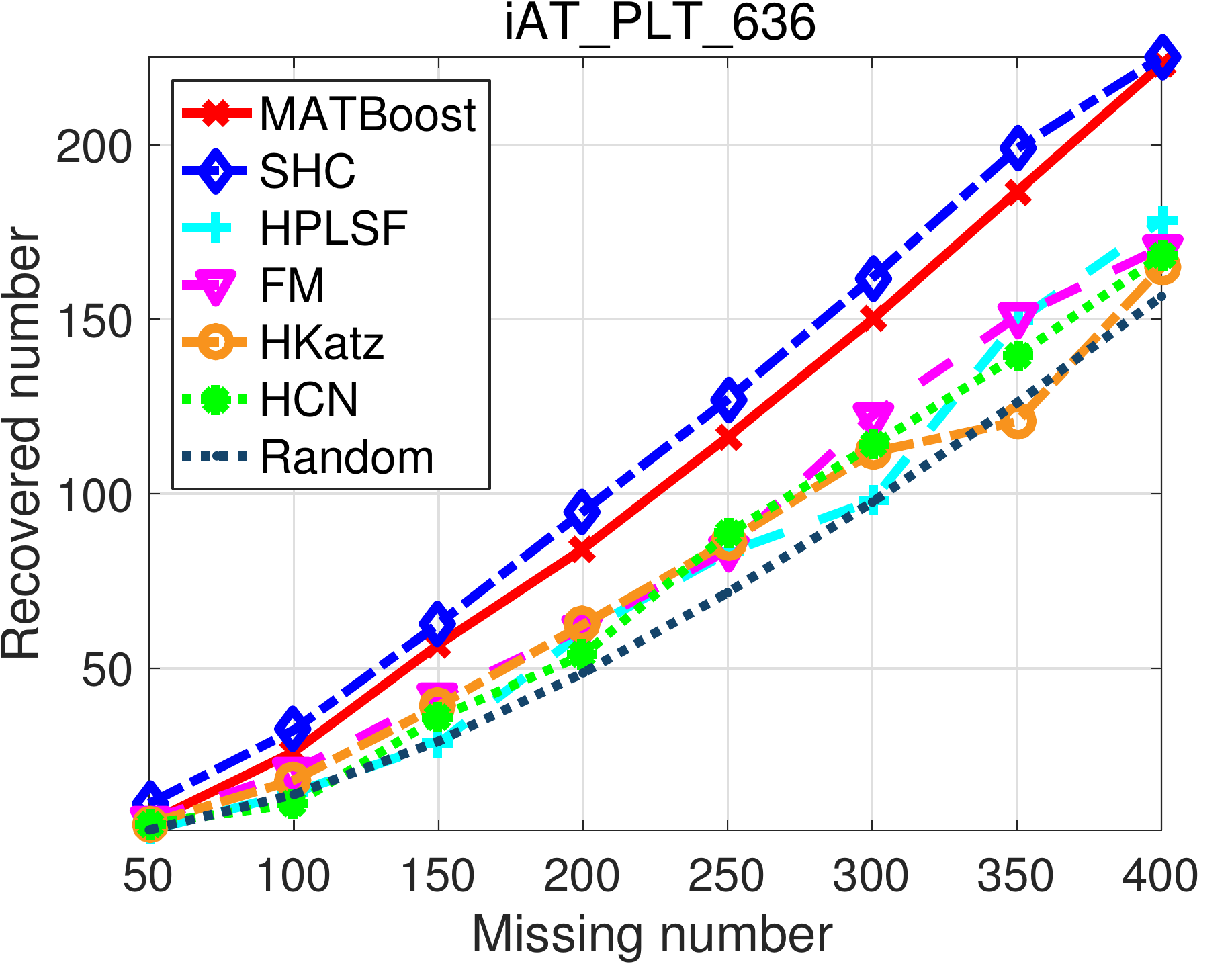}}
\subfloat[iAF692 dataset.]{\includegraphics[width=0.24\textwidth]{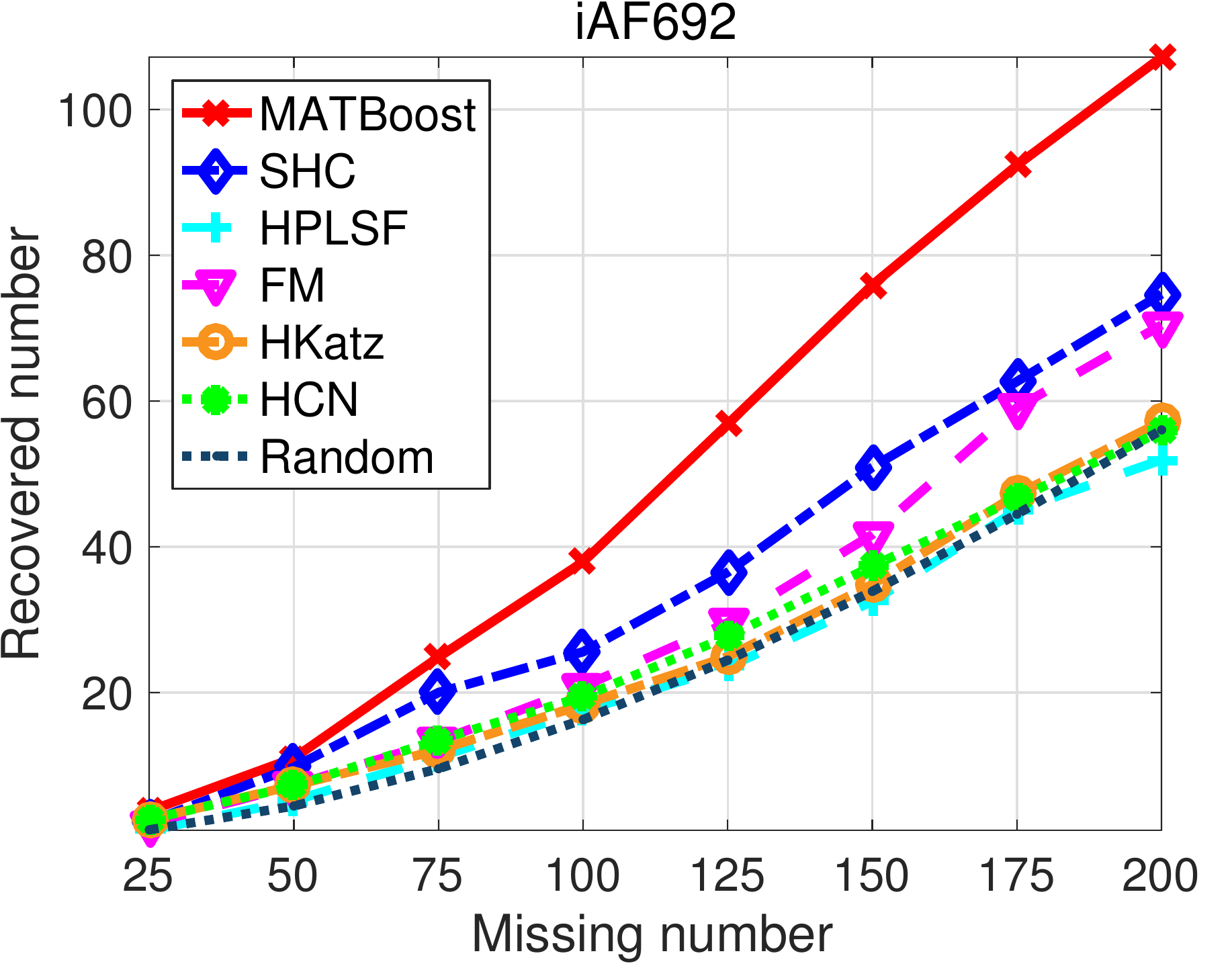}}
\subfloat[iHN637 dataset.]{\includegraphics[width=0.24\textwidth]{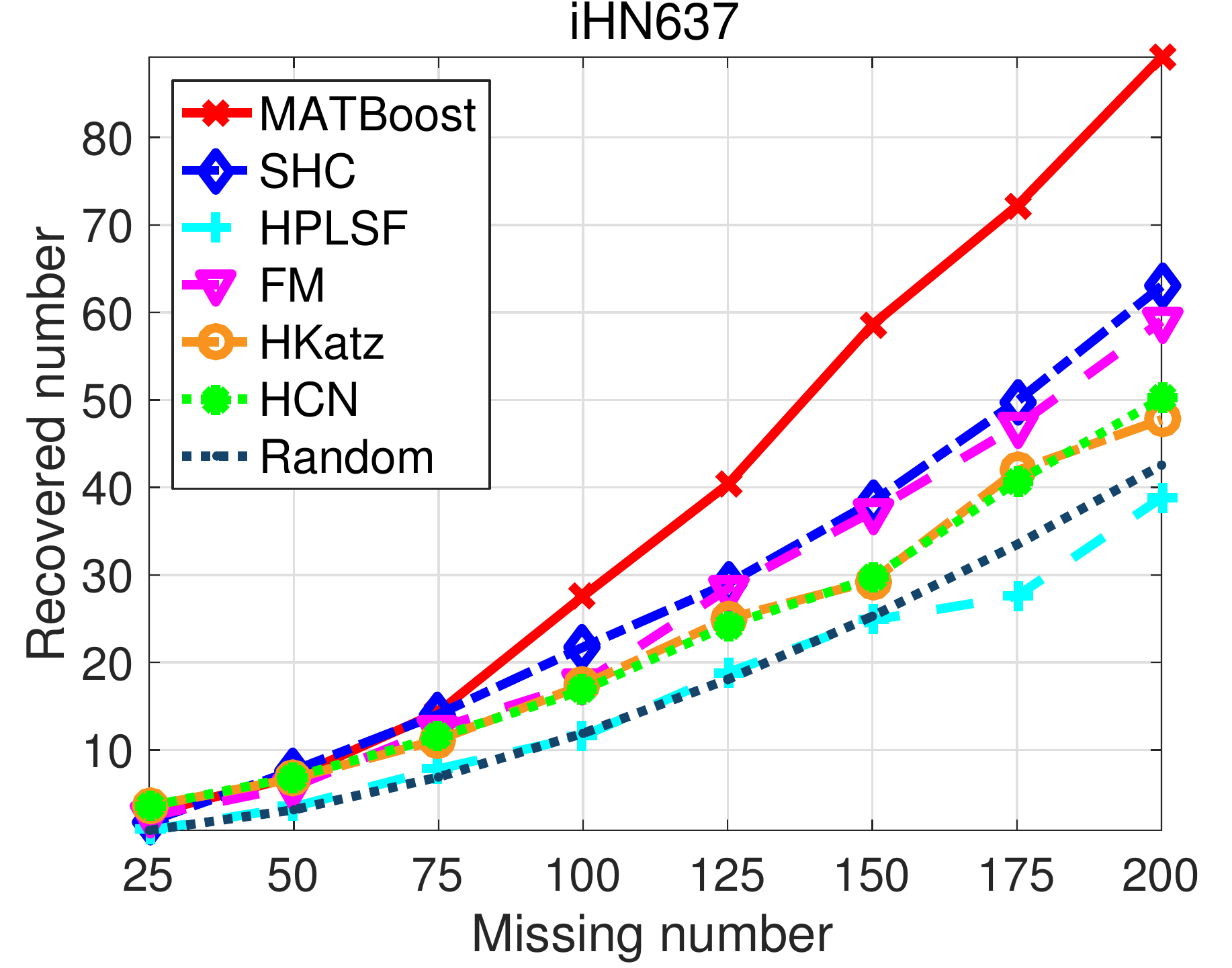}}
\subfloat[iIT341 dataset.]{\includegraphics[width=0.24\textwidth]{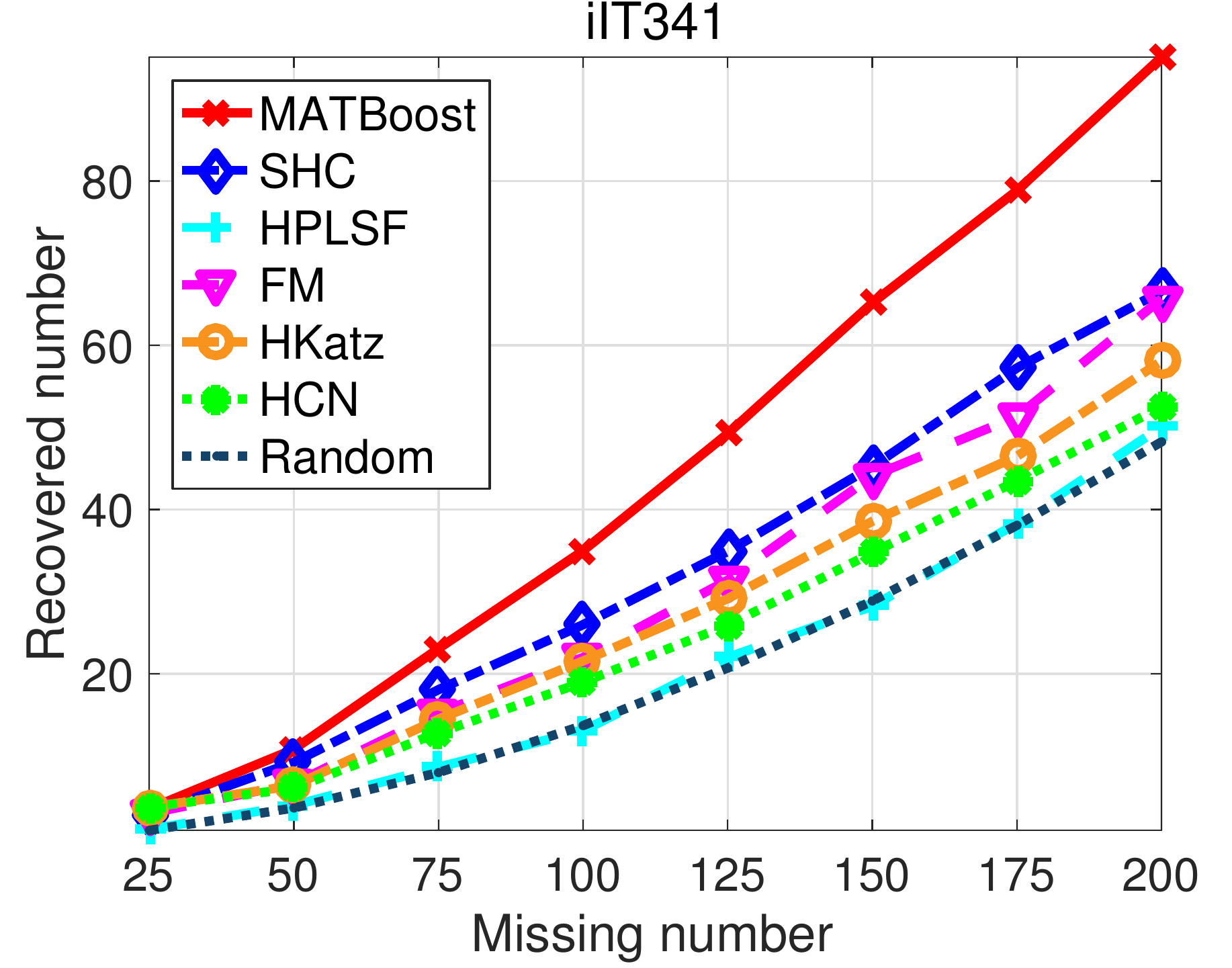}}
\caption{\small Number of recovered reactions for different numbers of missing reactions.}
\label{RMfig}
\end{figure*}

\section{Experimental Results}
To demonstrate the effectiveness of the proposed Matrix Boosting (MATBoost) algorithm in reaction recovery tasks, we compare it with six baseline methods on eight metabolic networks. The datasets and source code in this paper are available online: https://github.com/ufo0010/HyperLinkPrediction.

\subsection{Baselline algorithms}
We compare the proposed Matrix Boosting algorithm with the following six baseline algorithms:

\noindent \textbf{SHC} (Spectral Hypergraph Clustering) \cite{zhou2006learning} is a state-of-the-art hypergraph learning algorithm. SHC outputs classification scores by $f = (I-\xi\Theta)^{-1}y$. The hyperparameter $\xi$ is determined by searching over the grid \{0.01,0.1,0.5,0.99,1\} using five-fold cross validation. SHC is originally designed to classify hypergraph vertices leveraging their hyperlink relations. Here we transpose the incidence matrices to change each vertex into a hyperlink and each hyperlink into a vertex. Thus SHC can be used for hyperlink prediction.

\noindent \textbf{HPLSF} \cite{xu2013hyperlink} is a hyperlink prediction method using supervised learning. It calculates an entropy score among vertices along each latent feature dimension in order to get a fixed-length feature input. We train a logistic regression model on these entropy features in order to output prediction scores. 

\noindent \textbf{FM} (Factorization Machine) \cite{rendle2010factorization,rendle2012factorization} is a flexible factorization model. We use the classification function of FM, where columns of the training incidence matrix are directly used as input features to the model.

\noindent \textbf{HKatz} is a generalization of Katz index \cite{katz1953new} for hyperlinks. Concretely, a hyperlink containing $m$ vertices will have $m(m-1)/2$ pairwise Katz indices. We calculate their average as the HKatz (Hypernetwork Katz) index of the hyperlink. The parameter $\beta$ is determined by searching over \{0.001,0.005,0.01,0.1,0.5\} using five-fold cross validation.

\noindent \textbf{HCN} is a generalization of common neighbors \cite{liben2007link} for hyperlinks. For a hyperlink containing $m$ vertices, the average of the $m(m-1)/2$ pairwise common neighbors is calculated as the HCN (Hypernetwork Common Neighbors) score for this hyperlink.

\noindent \textbf{Random}: a theoretical baseline for comparing algorithms' performance against random. It is equal to assigning random scores between [0,1] to all testing hyperlinks.

In MATBoost and FM, libFM \cite{rendle2012factorization} is adopted as the default matrix factorization software. The optimization scheme is MCMC, which has advantages of high accuracy and automatic regularization. The latent factor number $k$ is set to 8 (default) in both algorithms. The lasso hyperparameter $\alpha$ and the maximum iteration number of MATBoost is set to 0.1 and 10 respectively.

\subsection{Evaluation and implementation}
We evaluate the reaction recovery performance based on their prediction scores. We adopt AUC (area under the ROC curve) as one measure. We also evaluate performance based on how many missing reactions can be successfully recovered if we select top $n$ reactions in $U$ as predictions ($n$ is the number of missing reactions). We call the second measure ``Recovered number". Compared to AUC, this measure only focuses on top predictions and can more directly reflect how well hyperlink prediction algorithms perform in practice.

We implement our algorithms mainly in MATLAB. All experiments are done on a twelve-core Intel Xeon Linux server. All experiments are repeated twelve times and the average results are presented.

\subsection{Datasets and results}

We download all 11893 reactions from BIGG\footnote{http://bigg.ucsd.edu} to build a universal reaction pool. These reactions are collected from 79 metabolic networks of various organisms. We conduct experiments on eight metabolic networks from five species: E. coli, H. sapiens, M. barkeri, Cl. ljungdahlii, and H. pylori. For each metabolic network, we filter out the universal reactions which contain exotic metabolites from the pool. We also filter out the network's existing reactions from the pool. Table \ref{tab} shows the statistics of the eight datasets.
\begin{table}[htbp]
\centering
\small
\resizebox{0.47\textwidth}{!}{
\begin{tabular}{l|c|c|c|c}
    \hline
    \hline
    Dataset & Species & Vertices & Hyperlinks & Neg. Hyperlinks \\
    \hline
    \hline
    (a) iJO1366 & E. coli & 1805 & 2583 & 1253 \\
    \hline
    (b) iAF1260b & E. coli & 1668 & 2388 & 1242 \\
    \hline
    (c) e\_coli\_core & E. coli & 72 & 95 & 91 \\
    \hline
    (d) RECON1 & H. sapiens & 2766 & 3742 & 973 \\
    \hline
    (e) iAT\_PLT\_636 & H. sapiens & 738 & 1008 & 621 \\
    \hline
    (f) iAF692 & M. barkeri & 628 & 690 & 513 \\
    \hline
    (g) iHN637 & Cl. Ljungdahlii & 698 & 785 & 740 \\
    \hline
    (h) iIT341 & H. pylori & 485 & 554 & 629 \\
    \hline
 \end{tabular}}
 \caption{Statistics of the eight datasets.}
 \label{tab}
\end{table}

For each dataset, we randomly delete [5:5:40], [25:25:200], [50:50:400] or [100:100:800] (according to network size) reactions as missing hyperlinks, and keep the remaining ones as training data. Due to the space limit, we only present the results of ``Recovered number''. The AUC results are in the supplemental material. 

As shown in Figure \ref{RMfig}, MATBoost generally achieves higher performance than other baselines. We observe that MATBoost recovers a significantly larger number of reactions than other baselines in datasets (a) -- (c) and (f) -- (h), and achieves competitive results with SHC in datasets (d) and (e). \textbf{This is because}: 1) MATBoost avoids directly performing inference in the incidence space that has size $O(2^M)$ by first inferring the adjacency matrix. This transforms an $O(2^M)$ problem into an $O(M^2)$ problem, which greatly reduces the problem size and eases the problem; 2) MATBoost leverages the powerful matrix factorization technique to perform inference in the adjacency space, which is able to well recover the unseen compound interactions given by $\Delta A^-$; 3) MATBoost infers the incidence matrix by solving a constrained optimization problem, which recovers the reactions that best match the predicted compound interactions in adjacency space; 3) MATBoost takes advantage of a fixed-point iteration to boost performance and benefits from ensemble averaging. In all eight datasets, MATBoost is able to successfully recover over or nearly half of the missing reactions with the increasing of missing number. This high recall in predicting top reactions is very important because biologists usually can only verify top predictions due to resource and time limits. 

In addition, we observe that SHC generally ranks second. This is an interesting result indicating that vertices and hyperlinks in a hypernetwork are sometimes interchangeable (by transposing the incidence matrix). We find that HKatz and HCN do not perform well. The phenomenon suggests that simply generalizing topological indices of links to hyperlinks by averaging may not be able to reflect the likelihood of a hyperlink. In terms of time complexity, MATBoost is also efficient. It takes only a few minutes to finish on the largest task.


\vspace*{-1\baselineskip}

\section{Conclusions}
In this paper, we consider the important task of predicting missing metabolic reactions for metabolic networks. By exploiting the problem structure, we model it as a hyperlink prediction problem. Hyperlink prediction is an interesting and challenging problem, however, relevant research are still limited. We proposed a novel algorithm for hyperlink prediction, called Matrix Boosting (MATBoost). It leverages the problem structure and implements an alternating C-M optimization framework in order to perform inference jointly in the incidence space and adjacency space. We have provided extensive evaluation results to demonstrate the unmatched performance of MATBoost. We compared our algorithm with six baselines. Experimental results on eight metabolic networks demonstrated that our MATBoost algorithm could recover over half of the missing reactions, which outperformed all existing methods by a large margin. 

Hyperlink prediction may also be useful in many other application domains such as collaboration, authorship, and social networks. We will study them in our future work.
\bibliographystyle{aaai}
\bibliography{references}

\end{document}